\documentclass[runningheads]{llncs}

\usepackage[T1]{fontenc}
\usepackage{acronym}
\usepackage[colorinlistoftodos]{todonotes}
\usepackage{amsmath, amssymb}
\usepackage{hyperref}
\usepackage[plain]{fancyref}
\usepackage{geometry}
\usepackage{graphicx}
\usepackage[inline]{enumitem}
\usepackage{xcolor}
\usepackage{listings}
\usepackage{setspace}
\usepackage{proof}
\usepackage{xspace}

\usepackage[backend=biber,
   style=lncs,
   citestyle=numeric,
   natbib=true,
   maxcitenames=1, 
   mincitenames=1,
   giveninits=true,
   hyperref=true,
   sorting=none,
   defernumbers]{biblatex}

\DeclareNameAlias{default}{family-given}
\AtEveryBibitem{%
   \clearlist{address}
   \clearfield{date}
   \clearfield{eprint}
   \clearfield{month}
   \clearfield{note}
   \clearfield{arxivId}
   \clearfield{volume}
   \clearfield{url}
   \clearname{editor}
   \ifentrytype{inproceedings}
     {\clearfield{day}
      \clearfield{month}
      \clearfield{volume}}{}
}

\definecolor{athenaKeyword}{HTML}{59007f}     
\definecolor{athenaOperator}{rgb}{0.2,0.2,0.7}  
\definecolor{athenaComment}{rgb}{0.3,0.6,0.3}   
\definecolor{athenaString}{rgb}{0.1,0.2,0.3}    

\definecolor{maudeKeyword}{HTML}{003f7f}     
\definecolor{maudeOperator}{rgb}{0.2,0.2,0.7}  
\definecolor{maudeComment}{rgb}{0.25,0.5,0.25} 
\definecolor{maudeString}{rgb}{0.55,0.15,0.15}  

\lstdefinelanguage{Athena}{
  morekeywords=[1]{else,holds?,error,module,domain,domains,declare,define,forall,exists,apply-method,%
    conclude,assume,suppose-absurd,pick-any,with-witness,pick-witness,%
    for,by-induction,datatype,datatypes,datatype-cases,structure,%
    check,match,let,letrec,try,assert,lambda,method},
  keywordstyle=[1]\color{athenaKeyword}\bfseries,
  morekeywords=[2]{|,\~,\=>,\==>,\<==>,=,:=},
  keywordstyle=[2]\color{athenaOperator}\bfseries,
  sensitive=true,
  morecomment=[l]{\#},
  commentstyle=\color{athenaComment}\ttfamily,
  morestring=[b]",
  stringstyle=\color{athenaString}\ttfamily,
  showstringspaces=false,
  alsoletter={-}
}

\lstdefinestyle{athena}{
  language=Athena,
  basicstyle=\ttfamily\small,
  tabsize=2,
  keepspaces=true,
  showstringspaces=false,
  breaklines=true,
  aboveskip=1em,
  belowskip=1em,
  numbers=left,
  numberstyle=\tiny\color{gray},
  captionpos=b,
  escapechar=`
}

\lstnewenvironment{athena}[1][]
 {\lstset{style=athena,#1}}{}

\lstdefinestyle{athenaInline}{
  language=Athena,
  basicstyle=\ttfamily\small,
  tabsize=2,
  keepspaces=true,
  showstringspaces=false,
  breaklines=true,
  aboveskip=1em,
  belowskip=1em,
  numbers=none,
  captionpos=b,
  escapechar=`
}
\lstnewenvironment{athenainline}[1][] {\lstset{style=athenaInline,#1}}{}

\lstdefinelanguage{Maude}{%
  morekeywords=[1]{ctor,mod,fmod,endm,endfm,pr,protecting,ex,extending,inc,including,sort,sorts,subsort,subsorts,mb,cmb,var,vars,op,ops,eq,ceq,rl,crl,if,search,red,reduce},
  sensitive=true,
  morecomment=[l]{--},
  morecomment=[s]{/*}{*/},
  morestring=[b]"%
}
\lstdefinestyle{maude}{
  language=Maude,
  basicstyle=\ttfamily\singlespacing\small,
  keywordstyle=\color{maudeKeyword}\bfseries,
  commentstyle=\color{maudeComment}\ttfamily,
  stringstyle=\color{maudeString}\ttfamily,
  columns=flexible,
  numbers=left,
  numberstyle=\tiny\color{gray},
  showstringspaces=false,
  keepspaces=true,
  aboveskip=-3pt,
  breaklines=true,
  tabsize=2,
  escapechar=`,
  literate={_}{\_}1, 
  captionpos=b        
}

\lstnewenvironment{maude}[1][] {\lstset{style=maude,#1}}{}
\lstnewenvironment{maudeinline}[1][] {\lstset{style=maude,#1, numbers=none}}{}

\newcommand{\athenacmd}[1]{\texttt{\textcolor{athenaKeyword}{\detokenize{#1}}}}
\newcommand{\athenacode}[1]{\texttt{\detokenize{#1}}}
\newcommand{\maudecmd}[1]{\texttt{\textcolor{maudeKeyword}{\detokenize{#1}}}}

\def\fref{\Fref}
\renewcommand{\tablename}{Table}  
\renewcommand{\figurename}{Figure}  
  
\newcommand*{\fancyreflnlabelprefix}{ln}
\newcommand*{\Freflnname}{Line}
\newcommand*{\freflnname}{\MakeLowercase{\Freflnname}}
\Frefformat{vario}{\fancyreflnlabelprefix}%
  {\Freflnname\fancyrefdefaultspacing#1}
\frefformat{vario}{\fancyreflnlabelprefix}%
  {\freflnname\fancyrefdefaultspacing#1}
\Frefformat{plain}{\fancyreflnlabelprefix}%
  {\Freflnname\fancyrefdefaultspacing#1}
\frefformat{plain}{\fancyreflnlabelprefix}%
  {\freflnname\fancyrefdefaultspacing#1}

\newcommand{\lnrange}[2]{Lines~\ref{#1}--\ref{#2}}    

\newcommand*{\fancyrefdeflabelprefix}{def}
\newcommand*{\Frefdefname}{Definition}
\newcommand*{\frefdefname}{\MakeLowercase{\Frefdefname}}
\Frefformat{vario}{\fancyrefdeflabelprefix}%
  {\Frefdefname\fancyrefdefaultspacing#1#3}
\frefformat{vario}{\fancyrefdeflabelprefix}%
  {\frefdefname\fancyrefdefaultspacing#1#3}
\Frefformat{plain}{\fancyrefdeflabelprefix}%
  {\Frefdefname\fancyrefdefaultspacing#1}
\frefformat{plain}{\fancyrefdeflabelprefix}%
  {\frefdefname\fancyrefdefaultspacing#1}    
  
\newcommand*{\fancyreftheolabelprefix}{theo}
\newcommand*{\Freftheoname}{Theorem}
\newcommand*{\freftheoname}{\MakeLowercase{\Freftheoname}}
\Frefformat{vario}{\fancyreftheolabelprefix}%
  {\Freftheoname\fancyrefdefaultspacing#1#3}
\frefformat{vario}{\fancyreftheolabelprefix}%
  {\freftheoname\fancyrefdefaultspacing#1#3}
\Frefformat{plain}{\fancyreftheolabelprefix}%
  {\Freftheoname\fancyrefdefaultspacing#1}
\frefformat{plain}{\fancyreftheolabelprefix}%
  {\freftheoname\fancyrefdefaultspacing#1}    

\newcommand*{\fancyrefproplabelprefix}{prop}
\newcommand*{\Frefpropname}{Proposition}
\newcommand*{\frefpropname}{\MakeLowercase{\Frefpropname}}
\Frefformat{vario}{\fancyrefproplabelprefix}%
  {\Frefpropname\fancyrefdefaultspacing#1#3}
\frefformat{vario}{\fancyrefproplabelprefix}%
  {\frefpropname\fancyrefdefaultspacing#1#3}
\Frefformat{plain}{\fancyrefproplabelprefix}%
  {\Frefpropname\fancyrefdefaultspacing#1}
\frefformat{plain}{\fancyrefproplabelprefix}%
  {\frefpropname\fancyrefdefaultspacing#1}    

\newcommand*{\fancyreflstlabelprefix}{lst}
\newcommand*{\Freflstname}{Listing}
\newcommand*{\freflstname}{\MakeLowercase{\Freflstname}}
\Frefformat{vario}{\fancyreflstlabelprefix}%
  {\Freflstname\fancyrefdefaultspacing#1#3}
\frefformat{vario}{\fancyreflstlabelprefix}%
  {\freflstname\fancyrefdefaultspacing#1#3}
\Frefformat{plain}{\fancyreflstlabelprefix}%
  {\Freflstname\fancyrefdefaultspacing#1}
\frefformat{plain}{\fancyreflstlabelprefix}%
  {\freflstname\fancyrefdefaultspacing#1}

\newcommand*{\fancyrefremlabelprefix}{rem}
\newcommand*{\Frefremname}{Remark}
\newcommand*{\frefremname}{\MakeLowercase{\Frefremname}}
\Frefformat{vario}{\fancyrefremlabelprefix}%
  {\Frefremname\fancyrefdefaultspacing#1#3}
\frefformat{vario}{\fancyrefremlabelprefix}%
  {\frefremname\fancyrefdefaultspacing#1#3}
\Frefformat{plain}{\fancyrefremlabelprefix}%
  {\Frefremname\fancyrefdefaultspacing#1}
\frefformat{plain}{\fancyrefremlabelprefix}%
  {\frefremname\fancyrefdefaultspacing#1}

\definecolor{author}{rgb}{.5, .5, .5}
\definecolor{comment}{rgb}{.1, .0, .9}
\definecolor{note}{rgb}{.9, .4, .0}
\definecolor{idea}{rgb}{.1, .7, .0}
\definecolor{missing}{rgb}{.9, .1, .0}

{}

\newcommand{\mtoa}{$\textsf{maude2athena}$\xspace}
\newcommand{\eg}{\emph{e.g.,}\xspace}
\newcommand{\ie}{\emph{i.e.,}\xspace}

\newcommand{\tr}{\emph{tr}\xspace}
\newcommand{\ecal}{\mathcal{E}\xspace}

\acrodef{RL}{Rewriting Logic}
\acrodef{EC}{equivalence classes}
\acrodef{RT}{rewrite theory}
\acrodef{ITP}{Inductive Theorem Prover}

\begin{document}

\title{Equational and Inductive Reasoning for Maude in Athena}

\author{
    Mateo Sanabria\inst{1}\orcidID{0000-0003-3407-9792} \and
    Carlos Varela\inst{2}\orcidID{0000-0003-4708-3109} \and
    Camilo Rocha\inst{3}\orcidID{0000-0003-4356-7704} \and
    Nicolás Cardozo\inst{1}\orcidID{0000-0002-1094-9952}}

\authorrunning{M.Sanabria et al.}

\institute{Universidad de los Andes, Bogotá , Colombia \and
           Rensselaer Polytechnic Institute, Troy, USA \and
           Pontificia Universidad Javeriana, Cali, Colombia
}
\maketitle

\begin{abstract}
  In the rewriting logic framework, equational-based specifications
  are used to define deterministic functional behavior, abstract data
  types, and canonical representations of data. These specifications
  include a (possibly order-sorted) signature and equations
  interpreted modulo structural axioms, such as associativity,
  commutativity, and identity.  While equational rewriting provides a
  powerful basis for execution and symbolic reasoning, it does not by
  itself offer native support for inductive or deductive reasoning.
  This paper presents \mtoa, a framework that systematically
  translates Maude's equational theories into Athena, a
  theorem proving language designed to support natural deduction
  proofs over many-sorted first-order logic specifications, including inductive reasoning, equational chaining, case-based reasoning, and proofs by contradiction.
  The translation supports induction-based reasoning modulo structural axioms with parametric
  induction rules; it faithfully encodes membership equational logic
  in a many-sorted setting without exponential blowup under reasonable
  conditions.
  This approach preserves the semantics of the original specification,
  while ensuring that the translation remains compact and amenable to deductive reasoning.  This work helps bridge the gap between model checking and theorem proving, enabling formal verification efforts that can benefit from both of these approaches.

  \keywords{Deductive reasoning \and Inductive reasoning \and
    Membership equational logic \and Reasoning modulo axioms \and Model checking and theorem proving \and
    Maude \and Athena }
\end{abstract}


\section{Introduction}
\label{sec:introduction}

Rewriting Logic~\cite{meseguer1992conditional} has established itself
as a powerful semantic and specification framework for modeling
computational systems, supporting both executable specifications and
rigorous equational reasoning. Within this framework,
\emph{equational-based specifications} play a central role in defining
deterministic functional behavior, abstract data types, and canonical
representations of data. Languages such as Maude~\cite{clavel2007all}
provide strong support for this style of specification through
membership equational logic, offering subsort relations, operator
overloading, and rewriting modulo structural axioms such as
associativity, commutativity, and identity.
While rewriting logic is a logical framework that can be used for
symbolic reasoning, including reasoning over its own
specifications~\cite{clavel-reflection-tcs2007}, Maude currently
offers limited support for interactive theorem proving, especially for
proofs that require explicit inductive reasoning or fine-grained proof
control. In contrast, interactive theorem provers such as
Athena~\cite{arkoudas17} provide robust environments for
natural-deduction reasoning in many-sorted first-order logic, together
with strong support for induction and proof structuring. However,
Athena lacks native support for the order-sorted features that are
essential to many Maude specifications. As a consequence,
fully-fledged Maude specifications cannot be directly reused within
Athena without a careful reconciliation of their underlying logical
foundations.

This paper presents $\mtoa$, a framework that systematically
translates Maude functional modules into Athena modules, enabling
equational and inductive reasoning over Maude specifications within
Athena’s interactive proof environment. The translation is grounded in
a semantics-preserving mapping from membership equational logic to
many-sorted first-order logic, in which subsort relations are made
explicit through cast operators and coherence
axioms~\cite{li2018method}.  To ensure scalability and avoid
exponential blowup, the approach relies on the notion of strictly
sensible order-sorted algebras~\cite{li2018method,goguen1992order},
which guarantees a linear and unambiguous translation in the presence
of operator overloading.

A central challenge in this translation is the loss of native
structural induction when order-sorted datatypes are flattened into
many-sorted domains. Although this transformation preserves equational
reasoning, it removes the inductive structure that is essential for
proving many correctness properties. To address this issue, Athena is
extended with \emph{parametric structural induction primitives} that
reconstruct the induction principles implicit in the original Maude
specification. This design allows different induction schemes to be
instantiated as needed, enabling inductive proofs over translated
domains without resorting to encodings of operational semantics or
meta-level reasoning. In this paper, this parametric framework is
instantiated with structural induction for sufficiently complete
equational specifications~\cite{rocha-suffcompl-lpar2010}; other induction principles can be
accommodated as well, such as structural induction over the oriented
equations of a terminating equational theory.

By bridging executable equational specifications in Maude with
interactive inductive proofs in Athena, $\mtoa$ enables a
compositional workflow in which specification, execution, and proof
coexist within a single formal development. The proposed translation
preserves the equational theory of the original specification while
remaining compact and amenable to both interactive and automated
reasoning. The approach is validated through theoretical results
establishing the correctness of the translation, as well as through a
non-trivial case study involving the verification of a compiler
specification that relies heavily on subsorting and inductive
reasoning.\footnote{The $\mtoa$ framework is available at \url{https://github.com/FLAGlab/Maude2Athena}.}


\section{Preliminaries}
\label{sec:background}

This section presents an overview on rewriting logic in  Maude, and theorem proving in
Athena.

\subsection{Rewriting Logic and Maude}
\label{sec:maude}

Rewriting logic~\cite{meseguer1992conditional} is built on top of
membership equational logic~\cite{bouhoula2000specification} as the
underlying equational formalism.
Maude~\cite{clavel2007all} is a high-performance logical framework and
specification language based on rewriting logic.
Membership equational logic supports sorts, equations, and membership
axioms that characterize the elements of a sort.  While the full
language extends these characteristics with rewrite rules to represent
transitions in concurrent systems, this paper focuses on the
translation of the functional (equational) sub-language.

Membership equational logic is defined over a many-kinded signature
$(K, F)$, where $K$ is a set of kinds and $F = \{ F_{w,k} \}_{(w,k)
  \in K^* \times K}$ is a family of function symbols typed over those
kinds. In addition to kinds, the logic considers a $K$-indexed family
of sorts $S=\{S_k\}_{k \in K}$, which allows for ascribing sorts to terms
using the notation $t:s$ (term $t$ has sort $s$). Thus, a complete
signature in membership equational logic is a triple $\Sigma =
(K,F,S)$.
The sorts $S$ are equipped with a partial order $\le$ (subsorting),
and operators in $F$ are defined over these sorts. This view 
associates the kinds $K$ as the connected components of the subsort
relation $\le$. Since an order-sorted algebra can be canonically
mapped to a membership
algebra~\cite{meseguer1992conditional,membership_algebra}, the
order-sorted setting is used throughout this paper.  A membership
equational theory is a pair $(\Sigma, E)$, combining a signature
$\Sigma$ with a set of sentences $E$ (conditional equations and
membership axioms). Equality modulo $E$, denoted $=_E$, is the finest
congruence on the term algebra $T_\Sigma(X)$ that satisfies the Horn
clauses in $E$. That is, for any terms $t, u \in T_{\Sigma}(X)$, $t
=_E u$ iff $E \vdash t = u$.
Semantic structures of equational theories are called
\textit{algebras}. The expression $T_\Sigma$ is the \textit{sorted
  ground term algebra} of $\Sigma$ and $T_\Sigma(X)$ the
\textit{sorted term algebra} of $\Sigma$ with variables in
$X$. Likewise, the quotient structure $T_{\Sigma/E}$ is the
\textit{initial algebra} of $(\Sigma, E)$ meaning that two terms $t, u
\in T_\Sigma$ are in the same equivalence class iff $t =_E u$. This is
naturally extended to the algebra $T_{\Sigma/E}(X)$ of terms with
variables in $X$ modulo the equations $E$.

Maude partitions the specification into a set of structural axioms $A$
(such as associativity, commutativity, and identity) and a set of
functional equations $E'$.  Consequently, an equational theory is
understood as $(\Sigma, E' \cup A)$, where rewriting is performed
modulo the structural axioms $A$ by orienting the equations from left
to right.
In Maude, functional modules correspond to membership equational
theories and define data types and their operations. Equations are
treated as simplification rules applied left to right. By assuming
\textit{sort-decreasingness}, \textit{ground confluence}, and
\textit{operational termination}, repeated application of the oriented
equations reduces a term to its
canonical form~\cite{duran2012church,lucas2009operational}.
Furthermore, it is assumed that an equational theory $(\Sigma, E' \cup
A)$ is \textit{sufficiently complete} with respect to a subsignature
$\Omega \subseteq \Sigma$ meaning that, for any sort $s \in S$ and
term $t \in T_{\Sigma, s}$, there is a term $u \in T_{\Omega, s}$ such
that $t =_E u$.



\fref{lst:Peano_maude} presents a functional module specification of
Peano numbers. The \texttt{PEANO} module is delimited by the Maude
keywords \maudecmd{fmod} and \maudecmd{endfm}. Sorts are declared with
the \maudecmd{sort} keyword, and subsort relations use
\maudecmd{subsorts} together with the symbol \maudecmd{<}
(\lnrange{ln:sorts-start}{ln:sorts-end}). Operators are declared with
\maudecmd{op}; each declaration specifies the operator name, its sort
signature, and an optional set of attributes that declare the
structural set of axioms for the equational theory. Particularly,
the attribute \maudecmd{ctor} marks constructors for the sort. The
elements of the sort \maudecmd{Even} are defined recursively by a
conditional equation (\fref{ln:peano-cmb}), and the Peano axioms are
introduced as equations (\lnrange{ln:peano-eq-star}{ln:peano-eq-end})
defining the set $E$ for the equational theory.

\begin{lstlisting}[language=Maude, style=maude,
                 caption={Definition of Peano numbers in Maude.},
                 label={lst:Peano_maude}]
fmod PEANO is
	sort NzNat Even Nat . `\label{ln:sorts-start}`
	subsorts NzNat Even < Nat . `\label{ln:sorts-end}`
	op zero : -> Even [ctor] .
	op s_ : Nat -> NzNat [ctor] .
	op _+_ : Nat Nat -> Nat .
	vars N M : Nat .
	cmb s s N : Even if N : Even . `\label{ln:peano-cmb}`
	eq N + zero = N . `\label{ln:peano-eq-star}`
	eq N + s M = s (N + M) . `\label{ln:peano-eq-end}`
endfm
\end{lstlisting}

\subsection{Theorem Proving in Athena}
\label{sec:athena}

Athena~\cite{arkoudas17} is a dual deduction and computation language:  users can define
{\it procedures}, which abstract over computations (i.e., lambda calculus abstractions in
functional programming); and {\it methods}, which abstract over Fitch-style natural
deductions~\cite{arkoudas2000denotational,arkoudas2005}.  When evaluating procedures,
if successful, Athena can return values (of different types), or else, it can diverge or
result in an error.  When evaluating methods, if successful, Athena produces logically
sound theorems---modulo its assumption base---as sentences in many-sorted first order
logic~\cite{juola1998maria}.  Athena has been effectively used to reason about concurrent
programming---particularly, the actor model~\cite{musser-varela-agere-2013,varela-lncs-2026},
and about safety-critical cyber-physical systems~\cite{PAUL2025103184,paul_2023_aesm}.

Athena's proofs have a natural deduction
style. Logical sentences can
either be asserted into the assumption base or proven as theorems
using Athena's deduction tools, with a soundness guarantee that any
proven theorem is a logical consequence of sentences in the assumption
base.
Athena performs automatic sort-checking to prevent ill-sorted
expressions in specifications and allows theorems to be introduced at
an abstract level by encapsulating proofs in parameterized methods,
which can be instantiated to prove different specializations of the
abstract theorems.
%
%
It defines two syntactic categories: an \textit{expression}, which
represents a computation, and a \textit{deduction}, a logical argument
such as a proof.  A valid deduction always concludes with a
\textit{sentence}, \ie the conclusion of the proof.

Sorts form the foundation of Athena's deductive language, classifying
all terms in its sentences. Sorts can be defined as \textit{domains},
\textit{structures}, and \textit{datatypes}, each providing different
mechanisms for constructing elements and each assuming different primitive axioms.

\textit{Domains} are sorts that describe the sets of objects to be
modeled. A \textit{datatype} is a special kind of domain that is
\textit{inductively generated}, meaning that every element of the
domain can be built up in a finite number of steps by applying
\textit{constructors} of the datatype. Beyond providing syntactic
convenience over domains, datatypes can be assumed to follow
\textit{free-generation axioms} for their elements, ensuring that
different constructor applications create different elements
(\emph{no-confusion} axioms), and that every element is represented by
some constructor application (\emph{no-junk} axioms).

\textit{Structures}, just like regular datatypes, are inductively
generated by their constructors. The only difference is that the
constructors might not be injective, so that the same constructor
applied to two distinct sequences of arguments might result in the
same value.

The keyword \athenacode{domain} introduces domains in
Athena. \fref{lst:sort_example} presents a definition for Peano numbers based
on three domains: \athenacode{Nat, Even, NzNat}, with function
symbols \athenacode{zero,S,plus}, defined using the
\athenacode{declare} instruction. Note that, alternatively, Peano numbers can be defined using 
datatypes as: \athenacode{datatype Nat := zero | (s Nat)}.

\begin{athena}[caption={Peano example definition using domains.}, label={lst:sort_example}]
domains Nat, Even, NzNat
declare zero:[]        -> Even
declare s:   [Nat]     -> NzNat
declare plus:[Nat Nat] -> Nat [+] 
define  [n m]           := [?n:Nat ?m:Nat]
assert  Plus-zero-axiom := (forall n .  (zero + n) = n) `\label{ln:athena-peano-axioms-init} `
assert* Plus-s-axiom    := ((n + (s m)) = (s (n + m))) `\label{ln:athena-peano-axioms-end} `
\end{athena}

The datatype definition provides free-generation axioms, ensuring that
different constructor applications yield distinct elements. In
contrast, the domain based approach offers more expressiveness by
allowing sort refinements, like \athenacode{Even} and
\athenacode{NzNat}, which enable more precise sort specifications.
The version using domains is inspired by languages with subsort
capabilities, such as Maude, and leverages Athena's many-sorted
foundation to provide fine-grained sort control.

The semantics of Athena defines how phrases \textit{F} (expressions or deductions) are evaluated within an
\textit{evaluation context}, a four-tuple $\langle \rho, \beta, \sigma, \gamma
\rangle$ representing the logical and computational environment in
which expressions are interpreted, where:

\begin{itemize}
\item
  $\rho$ is the execution environment, a computable function that maps
  any given identifier $I$ either to a value $V$ or to a special
  \textit{unbound} token.

\item $\beta$ is the assumption base, a finite set of sentences.

\item $\sigma$ is the store, a computable function that maps any
  natural number (representing a memory location) to a value (the
  location's contents) or to a special \textit{unassigned} token.

\item $\gamma$ is a set of symbols with their associated signatures,
  and a collection of sort constructors (with arities). $\gamma$ also
  includes information on whether a given sort constructor is a
  datatype or structure, and if so, which function symbols are its
  constructors.
\end{itemize}

The result of evaluating a phrase \textit{F} in an evaluation context 
$\langle\rho,\beta,\sigma,\gamma\rangle$ is one of the following:

\begin{enumerate}
\item A pair $(V,\sigma')$ consisting of a value \textit{V} and a
  store $\sigma'$, where \textit{V} is the output of the evaluation
  and $\sigma'$ reflects any side effects accumulated during the
  evaluation.

\item A pair consisting of an error message and a store $\sigma'$,
  indicating the occurrence of an error during the computation.

  \item Nontermination.
\end{enumerate}

Sentences are added into the assumption base using \athenacmd{assert} or 
\athenacmd{assert*}, where \athenacmd{assert*} uses implicit universal quantification for all free 
variables, before inserting the sentence into the assumption base. 
\lnrange{ln:athena-peano-axioms-init}{ln:athena-peano-axioms-end}
in \fref{lst:sort_example} add the properties of Peano numbers into
the assumption base.

Once sentences are introduced into the assumption base, they can be
used in deductive reasoning. Further, the datatypes and structures
definitions allow deductions of theorems that are not derivable from
the free-generation axioms alone, by using the
\athenacode{by-induction} method. For example, assuming that
\athenacode{Nat} follows a datatype definition, \fref{lst:deduction_induction} displays
the induction proof for the associativity property for the plus
operator, \fref{ln:plus-assoc}.

In general, the method \athenacode{by-induction} requires the proof of
the property for all base cases (\textit{zero} in the case of Peano,
\lnrange{ln:zero-star}{ln:zero-end}) and the proof for all inductive
definitions (\textit{S} for \athenacode{Nat} datatype,
\lnrange{ln:s-start}{ln:chain1-end}). Additionally, this induction proof
showcases one of the most useful Athena methods: \athenacode{chain}
(\lnrange{ln:chain0}{ln:zero-end} and
\lnrange{ln:chain1}{ln:chain1-end}). \athenacode{chain} allows equational
chaining based proofs, where each step is labeled with its
justification.

\begin{athena}[caption={Deduction of associativity property using datatype induction.},
               label={lst:deduction_induction}]
define [p q r] := [?p:Nat ?q:Nat ?r:Nat]
define plus_associative := (forall p q r . ((q + r) + p) = (q + (r + p))) `\label{ln:plus-assoc}`
conclude plus_associative
by-induction plus_associative {
   zero =>  pick-any q r `\label{ln:zero-star}`
           (!chain [((q + r) + zero) `\label{ln:chain0}`
                   --> (q + r)          [Plus-zero-axiom]
                   <-- (q + (r + zero)) [Plus-zero-axiom]) `\label{ln:zero-end}`
   | (s p) => let {induction-hypothesis := `\label{ln:s-start}`
               (forall ?q ?r . (?q + ?r) + p = ?q + (?r + p))}
           pick-any q r
           (!chain [ ((q + r) + (s p))`\label{ln:chain1}`
                   --> (s ((q + r) + p))   [Plus-s-axiom]
                   --> (s ((q + (r + p)))) [induction-hypothesis]
                   <-- (q + (s (r + p)))   [Plus-s-axiom]
                   <-- (q + (r + (s p))) `\label{ln:chain1-end}` ])
} 
\end{athena}


\section{From Maude to Athena}
\label{sec:translation}

At the heart of the proposed approach, lies the need to translate the
order-sorted structure of terms of an equational theory in Maude to
the many-sorted setting of equational theories in Athena.
This section presents such a translation as a function from Maude
modules to Athena theories.
Translation of membership (of terms in sorts) is
accomplished by using predicates in Athena.

\subsection{Strictly Sensible Order-Sorted Signatures}
\label{sec:strictly_sensible}

The translation provides a linear translation of \emph{strictly sensible} algebras to
many-sorted algebras~\cite{li2018method}.
The key idea of the translation is to add an equivalence relation
called core equality to the translated many-sorted structures. By
defining this relation, the complexity of translating a strictly
sensible order-sorted algebra to a many-sorted one is reduced and the
translated many-sorted algebra equations only increase by a very small
amount of new equations. Algebras definitions are lifted to 
signatures to keep the presentation of their results concise.

Fix a membership signature $\Sigma = (K, F, S)$, with kinds $K$,
function symbols $F$, and sorts $S$, and a poset $(S, \leq)$ over the
sorts. For a function symbol $g : s_1 \times \cdots \times s_n \to s$,
$\mathrm{target}(g) = s$ denotes its result sort.
In what follows, assume $f,f',f'' \in F$ and, for a fixed $n \in
\mathbb{N}$ and $s_i, s_i', s_i'' \in S$ for $0 \leq i < n$, where
$
    f: s_{0} \times \dots \times s_{n} \rightarrow s \ ; \
    f': s'_{0} \times \dots \times s'_{n} \rightarrow s' \ ; \
    f'': s''_{0} \times \dots \times s''_{n} \rightarrow s''.
$

Argument compatibility captures the idea of overloaded function
symbols that can be treated uniformly as their pair-wise argument
positions share a common supersort. Strongly sensible signatures have
argument compatible function symbols in which the target
sorts are the same.

\begin{definition}\label{def:strong_sensible}
  Two overloaded function symbols $f$ and $f'$ are called
  \textit{argument compatible}, denoted $ac(f,f')$, iff $(\forall \, i
  \;|\; 0 \leq i \leq n \; : s_i \equiv_{\leq} s'_{i} )$, where
  $s_i\equiv_{\leq}s'_i$ is a predicate expressing that $s_i$ and
  $s'_i$ are in the same connected component of $(\Sigma, \leq)$.
%
  The signature $\Sigma$ is \textit{strongly sensible} iff $(\forall \, f,f'
  \;|\; ac(f,f') \; : s = s')$.
\end{definition}

On the other hand, maximal argument-bounding signatures ensure that
every function symbol has a representative among its compatible function symbols,
thereby defining one representative for each class of
argument compatible function symbols.

\begin{definition}
  \label{def:maxiam_argument_bounding}
  The signature $\Sigma$ is \textit{maximal argument-bounding} iff:
  \[
  (\forall \, f \;|\;(\exists \, f'' \;|\;(\forall \, f' \;|\; ac(f,f')\;: (\forall \, i  \;|\; 0 \leq i \leq n \; : s'_{i} \le s''_{i} ) ))).
  \]
\end{definition}

A \textit{strictly sensible} signature is a signature that is both
strongly sensible and maximal argument-bounding. 
Strictly sensible signatures play a central role in the translation
from order-sorted equational theories in Maude to many-sorted theories
in Athena. It rules out ambiguous cases of function symbol overloading and
ensures that the translation remains well-defined and systematic.  In
contrast to more general translations (e.g., ~\cite{meseguerTranslation}),
which can produce non-linear or ambiguous mappings in the presence of
unrestricted overloading, requiring strict sensibility guarantees that
every function symbol has a single, consistent representative in the
translated signature.
In particular, strict sensibility ensures that each function symbol has a
unique representative in the translated signature, allowing the
order-sorted algebra to be flattened into a many-sorted representation
without loss of soundness or consistency.

In the rest of the paper, it is assumed that any signature $\Sigma$ is strictly
sensible.

\subsection{The Translation Function}

Let $\ecal = (\Sigma, E \cup A)$ be a membership equational theory, with
signature $(K, F, S)$ and poset of sorts~$(S, \leq)$, and $\Omega
\subseteq \Sigma$ be a constructor subsignature for $\ecal$.
This section defines the mapping~$\ecal~\mapsto~(\beta, \gamma)$ as
the function $\tr$, where $\beta$ is an Athena assumption base and
$\gamma$ is its symbol set. The $\rho$ execution environment and $\sigma$ store are omitted
from the mapping because they are orthogonal to the generated Athena
representation.
The behavior of $\tr$ is driven primarily by the subsort relation
$\le$ and by the constructors $\Omega$.

\begin{definition}[Translation Function $\mathrm{tr}$]
\label{def:translation}
  Let $\ecal = (\Sigma, E \cup A)$ be a membership equational theory with
  signature $\Sigma = (K, F, S)$, sort poset $(S, \leq)$, and constructor
  subsignature $\Omega \subseteq \Sigma$. The function
  $\mathrm{tr} : \ecal \to (\beta, \gamma)$ is defined by the following
  five components:
  \begin{enumerate}[label=\textbf{(\roman*)}]
  \item \textbf{Sort translation ($\mathrm{tr}_S$).} For each sort $s \in S$:
    \[
    \mathrm{tr}_S(s) = \begin{cases}
      \mathtt{datatype}(s,\,\{c \in \Omega \mid \mathrm{target}(c) = s\})
        & \text{if } s \notin \mathrm{dom}(\leq) \cup \mathrm{ran}(\leq)
          \text{ and } \exists\, c \in \Omega.\; \\[4pt]
      \mathtt{domain}(s) & \text{otherwise}
    \end{cases}
    \]

  \item \textbf{Function symbol translation ($\mathrm{tr}_F$).}
    Let $[f]_{\mathrm{ac}}$ denote the argument compatibility class of
    $f \in F$. For each class $[f]_{\mathrm{ac}}$, select the maximal
    representative $f^{*}$ (guaranteed to exist by strict sensibility),
    where $f^{*}\!: s_1 \times \cdots \times s_n \to s$, and set
    \[
      \gamma \;\cup\;
      \bigl\{\, f^{*}\!:\![\mathrm{tr}_S(s_1) \cdots
      \mathrm{tr}_S(s_n)] \to \mathrm{tr}_S(s) \,\bigr\}
      \;\mapsto\; \gamma.
    \]
    For each $s \leq s'$ in $(S,\leq)$, set
    \[
      \gamma \;\cup\;
      \bigl\{\, \mathrm{Cast}_{s \to s'}\!:\![\mathrm{tr}_S(s)]
      \to \mathrm{tr}_S(s') \,\bigr\}
      \;\mapsto\; \gamma.
    \]
    Additionally, for each $f^{*}\!: s \times s \to s$ in $\gamma$
    with structural attributes in $A$, let $x, y, z$ be fresh
    variables of sort $\mathrm{tr}_S(s)$ and set
    \[
      \beta \;\cup\;
      \left\{\;
      \begin{aligned}
        &f^{*}(f^{*}(x,y),z) = f^{*}(x,f^{*}(y,z))
          &&\text{if }\mathtt{assoc} \in \mathrm{attr}(f), \\
        &f^{*}(x,y) = f^{*}(y,x)
          &&\text{if }\mathtt{comm} \in \mathrm{attr}(f), \\
        &f^{*}(\mathrm{tr}_T(e),x) = x,\;
         f^{*}(x,\mathrm{tr}_T(e)) = x
          &&\text{if }\mathtt{id}\!:\!e \in \mathrm{attr}(f)
      \end{aligned}
      \;\right\}
      \;\mapsto\; \beta.
    \]

  \item \textbf{Term translation ($\mathrm{tr}_T$).} For each
    $t \in T_{\Sigma}(X)$:
    \[
    \mathrm{tr}_T(t) = \begin{cases}
      x & \text{if } t = x \in X \\[2pt]
      f^{*}(\mathrm{tr}_T(t_1),\ldots,\mathrm{tr}_T(t_n))
        & \text{if } t = f(t_1,\ldots,t_n)
    \end{cases}
    \]
    where $f^{*}$ is the maximal representative of $[f]_{\mathrm{ac}}$
    and, at each argument position~$i$ where the subterm has sort $s_i'$
    with $s_i' \leq s_i$ and $s_i' \neq s_i$, $\mathrm{tr}_T(t_i)$ is
    wrapped in the appropriate cast chain
    $\mathrm{Cast}_{s_{j} \to s_{j+1}}(\cdots(\mathrm{tr}_T(t_i))\cdots)$
    so that the result has sort $\mathrm{tr}_S(s_i)$.

  \item \textbf{Equation translation ($\mathrm{tr}_E$).}
    For each equation $l = r$ in $E \cup A$, set
    \[
      \beta \;\cup\;
      \bigl\{\, \mathrm{tr}_T(l) = \mathrm{tr}_T(r) \,\bigr\}
      \;\mapsto\; \beta.
    \]
    For each composable pair $s \leq s' \leq s''$, add the
    \emph{core equality}:
    \[
      \beta \;\cup\;
      \bigl\{\, \mathrm{Cast}_{s' \to s''}(\mathrm{Cast}_{s \to s'}(x))
      = \mathrm{Cast}_{s \to s''}(x) \,\bigr\}
      \;\mapsto\; \beta,
    \]
    ensuring all casting paths between the same source and target sorts are identified.

  \item \textbf{Membership translation ($\mathrm{tr}_M$).}
    For each sort $s$ appearing in a membership axiom, set
    \[
      \gamma \;\cup\;
      \bigl\{\, \mathrm{is\_s}\!:\![\mathrm{tr}_S(\mathrm{kind}(s))] \to
      \mathtt{Boolean} \,\bigr\}
      \;\mapsto\; \gamma.
    \]
    Then, for each unconditional membership $t : s$ and each conditional
    membership $t : s \text{ if } P$, set
    $\beta \cup \{\mathrm{tr}_M(\cdot)\} \mapsto \beta$, where:
    \[
    \mathrm{tr}_M(t : s) = \mathrm{is\_s}(\mathrm{tr}_T(t))
    \qquad
    \mathrm{tr}_M(t : s \text{ if } P) = (\mathrm{tr}_T(P) \Rightarrow
      \mathrm{is\_s}(\mathrm{tr}_T(t)))
    \]
  \end{enumerate}
\end{definition}

The five components of $\tr$ interact as a pipeline: $\mathrm{tr}_S$
determines how each sort is represented in Athena;
$\mathrm{tr}_F$ uses these representations to declare function and cast
symbols and to assert the structural axioms of $A$;
$\mathrm{tr}_T$ relies on both to translate terms, inserting
casts where implicit subsorting must become explicit;
$\mathrm{tr}_E$ and $\mathrm{tr}_M$ then lift the term translation to
equations and membership axioms, respectively.
To make each step concrete, the rest of this section walks through
$\tr$ component by component on two variants of the Peano numbers
specification: the simple module in \fref{lst:peano_maude_simple}
(no subsort relation) and the richer module in
\fref{lst:Peano_maude} (with subsorts \maudecmd{NzNat} and
\maudecmd{Even}), showing the corresponding Athena output in
\fref{lst:peano_athena_simple} and \fref{lst:peano_athena}.

\begin{figure}[h]
\centering
\begin{minipage}[b]{0.48\textwidth}
\begin{lstlisting}[language=Maude, style=maude, caption={Simple Peano module in Maude.},
                 label={lst:peano_maude_simple}]
fmod PEANO is
    sort Nat . 
    op zero : -> Nat [ctor] .
    op s_ : Nat -> Nat [ctor] .
    op _+_ : Nat Nat -> Nat .
    vars N M : Nat .
    eq N + zero = N . 
    eq N + s M = s (N + M) . 
endfm
\end{lstlisting}
\end{minipage}
\hfill
\begin{minipage}[b]{0.48\textwidth}
\begin{lstlisting}[language=Athena, style=athena, caption={Simple Peano module in Athena.},
                 label={lst:peano_athena_simple}]
module PEANO {                       
    datatype Nat := zero  | (s Nat) `\label{ln:datatype_peano_athena}`
    declare + : [Nat Nat] -> Nat `\label{ln:declare_peano_athena}`
    define [N M] := [?N:Nat ?M:Nat]
    assert* (N + zero = N)
    assert* (N + s M = s (N + M))
}
\end{lstlisting}
\end{minipage}
\end{figure}

\paragraph{Sorts.}
Through $\mathrm{tr}_S$, Maude sorts are translated into Athena domains or datatypes. Whenever a sort $s \in S$ satisfies
that it does not appear in the subsort relation $\leq$ and that there are constructors
with target sort~$s$, $\tr$ translates it to a datatype. In any other case, sorts are translated 
to domains. For instance, \maudecmd{zero} and \maudecmd{s_} in
\fref{lst:peano_maude_simple} are function symbols (with target sort
\maudecmd{Nat}) defined with the \maudecmd{ctor} attribute. Therefore,
$\mathrm{tr}_S$ uses their sort signatures to define the corresponding Athena
datatype, as shown in \fref{ln:datatype_peano_athena} of
\fref{lst:peano_athena_simple}.
The remaining defined function symbols with target sort $s$ are
translated with the \athenacmd{declare} keyword.  This is the case for
the \maudecmd{_+_} function symbol, which is translated into
\fref{ln:declare_peano_athena} of \fref{lst:peano_athena_simple}.

The treatment is different when the sort $s$ is part of the subsort
relation, as is the case in \fref{lst:Peano_maude}. This
module introduces \maudecmd{NzNat} and \maudecmd{Even} as subsorts of
\maudecmd{Nat}. In this case, the previous translation case no longer applies.
For sorts related through the subsort relation (\maudecmd{NzNat},
\maudecmd{Even}, and \maudecmd{Nat})
the translation function $\mathrm{tr}_S$ instantiates each sort as a plain
Athena domain (\fref{ln:doamins_peano} in
\fref{lst:peano_athena}). The process then proceeds through the four remaining
translation components ($\mathrm{tr}_F$, $\mathrm{tr}_T$, $\mathrm{tr}_E$, $\mathrm{tr}_M$):
\begin{enumerate*}[label=(\arabic*)]
  \item function symbols and introducing explicit cast function symbols,
  \item terms using casting function symbols,
  \item equations into Athena assertions using the transformed terms, and
  \item conditional membership axioms using term translation and predicate introduction.
\end{enumerate*}

\paragraph{Function Symbols.}
In the simple module (\fref{lst:peano_maude_simple}), no subsort
relation exists, so $\mathrm{tr}_F$ declares each function symbol
directly: \maudecmd{_+_} becomes the Athena declaration in
\fref{ln:declare_peano_athena} of \fref{lst:peano_athena_simple}.
In the richer module (\fref{lst:Peano_maude}), the subsorts introduce
overloading. The three operators \maudecmd{zero}, \maudecmd{s_}, and
\maudecmd{_+_} are grouped by $\mathrm{tr}_F$ into argument
compatibility classes; for each class the unique maximal
representative is selected and declared in $\gamma$
(\lnrange{ln:maximal_op_peano_start}{ln:maximal_op_peano_end} of
\fref{lst:peano_athena}). Additionally, the two subsort pairs
$\maudecmd{Even} \leq \maudecmd{Nat}$ and
$\maudecmd{NzNat} \leq \maudecmd{Nat}$ produce two cast symbols
(\lnrange{ln:casting_peano_start}{ln:casting_peano_end} of
\fref{lst:peano_athena}).

\paragraph{Terms and Equations.}
In the simple module, no subsort relation exists, so $\mathrm{tr}_T$
leaves terms unchanged and $\mathrm{tr}_E$ translates each equation directly
into an Athena assertion.  In the richer module, implicit coercions must
be made explicit.  Consider the first equation in
\fref{lst:Peano_maude} (\fref{ln:peano-eq-star}): because the
argument has sort \maudecmd{Even} while the operator expects
\maudecmd{Nat}, $\mathrm{tr}_T$ inserts a cast
(\fref{ln:term_eq_peano_strat} of \fref{lst:peano_athena}).
The second equation (\fref{ln:peano-eq-end} of
\fref{lst:Peano_maude}) requires casts on both the argument and the
result (\fref{ln:term_eq_peano_end} of \fref{lst:peano_athena}).
Because the subsort poset of this module has no composable triple
$s \leq s' \leq s''$, no core equalities are generated; in
specifications with longer subsort chains, $\mathrm{tr}_E$ would add
the corresponding transitivity assertions to $\beta$.

\paragraph{Memberships.}
Membership judgements have no native counterpart in Athena's many-sorted logic.
The component $\mathrm{tr}_M$ bridges this gap by introducing a unary predicate
for each sort that appears in a membership axiom; the predicate's domain is
the kind of that sort, so it can be applied to any term of the right kind.
In the richer Peano module, the conditional membership in
\fref{ln:peano-cmb} of \fref{lst:Peano_maude} is translated into
an implication guarded by a membership predicate
(\lnrange{ln:cmb_peano_start}{ln:cmb_peano_end} of
\fref{lst:peano_athena}).
Here, $\mathrm{tr}_M$ chooses the kind of the target sort as the
predicate's domain (\ie the top-most supersort in the connected
component).
The simple module contains no membership axioms, so $\mathrm{tr}_M$
produces no output for it.
The complete translation of \fref{lst:Peano_maude} is shown in
\fref{lst:peano_athena}.

\begin{lstlisting}[language=Athena, style=athena, caption={Peano Athena Module}, 
                 label={lst:peano_athena}]
module PEANO{ 
	domains NzNat Even Nat `\label{ln:doamins_peano}`
	declare s : [Nat] -> NzNat `\label{ln:maximal_op_peano_start}`
	declare zero : [] -> Even
	declare + : [Nat Nat] -> Nat `\label{ln:maximal_op_peano_end}`
	declare Cast_Even_to_Nat : [Even] -> Nat `\label{ln:casting_peano_start}`
	declare Cast_NzNat_to_Nat : [NzNat] -> Nat `\label{ln:casting_peano_end}`
	define [N M] := [?N:Nat ?M:Nat]
	assert* ((+ N (Cast_Even_to_Nat zero)) = N) `\label{ln:term_eq_peano_strat}`
	assert* ((+ N (Cast_NzNat_to_Nat (s M))) 
				= (Cast_NzNat_to_Nat (s (+ N M)))) `\label{ln:term_eq_peano_end}`
	declare is_even : [Nat] -> Boolean `\label{ln:cmb_peano_start}`
	assert* ((is_even N) ==> (is_even s s N)) `\label{ln:cmb_peano_end}` }

\end{lstlisting}

Finally, the correctness of the translation is stated with respect to
equational provability.

\begin{theorem}[Equational Provability]
\label{theo:correquality}
  Let $\ecal = (\Sigma, E\cup A)$ be a membership equational
  theory. If $\Sigma$ is strictly sensible and $\ecal$ is sufficiently
  complete w.r.t.\ $\Omega \subseteq \Sigma$, then for any
  $t,u \in T_{\Sigma,k}(X)$ of the same kind~$k$, the following two
  sentences are equivalent:
  \begin{enumerate}
  \item $t =_\ecal u$
  \item $\mathrm{tr}_T(t) =_{\beta} \mathrm{tr}_T(u)$, \ie
    $\mathrm{tr}_T(t)$ and $\mathrm{tr}_T(u)$ are provably equal
    in Athena under the theory $\mathrm{tr}(\ecal) = (\beta,\gamma)$
    with the inference rules of many-sorted equational logic.
  \end{enumerate}
\end{theorem}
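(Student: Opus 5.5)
The plan is to prove the two directions separately, both by induction on derivations. The soundness direction (1 $\Rightarrow$ 2) goes by induction on a derivation of $t = u$ in membership equational logic. The completeness direction (2 $\Rightarrow$ 1) builds a back-translation from many-sorted terms of $\mathrm{tr}(\ecal)$ to order-sorted terms and shows that it carries many-sorted equational derivations to $\ecal$-derivations. In both directions I would rely on the consequences of strict sensibility from~\cite{li2018method}: the maximal representative $f^{*}$ of each class $[f]_{\mathrm{ac}}$ is unique, and the cast-insertion in $\mathrm{tr}_T$ is well defined. The core equalities make any two cast chains with the same source and target provably equal.

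\textbf{Soundness ($1\Rightarrow 2$).} I would first prove three lemmas.
\begin{enumerate}
\item Substitution: $\mathrm{tr}_T(t\theta) =_\beta \mathrm{tr}_T(t)\,\mathrm{tr}(\theta)$, where $\mathrm{tr}(\theta)$ maps each variable to the cast-wrapped translation of its image. Casts appear because a substituted term may have a smaller sort than the variable; the core equalities collapse the resulting nested casts.
\item Cast normalization: any composite of $\mathrm{Cast}$ symbols from $s$ to $s'$ is $\beta$-equal to $\mathrm{Cast}_{s\to s'}$. This follows by induction on chain length using the core equalities.
\item Congruence: if $\mathrm{tr}_T(t_i) =_\beta \mathrm{tr}_T(u_i)$ for all $i$, then $\mathrm{tr}_T(f(\bar t)) =_\beta \mathrm{tr}_T(f(\bar u))$, even when $t_i$ and $u_i$ have different least sorts. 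Both sides are lifted by casts to the argument sorts of $f^{*}$, and these are unique by maximal argument-bounding.
\end{enumerate}
With these lemmas, each rule of the equational calculus is handled in turn.
\begin{itemize}
\item Reflexivity, symmetry and transitivity are immediate.
\item Congruence is the third lemma.
\item Replacement by an axiom $l=r$ in $E\cup A$: the instance is covered by $\mathrm{tr}_E(l=r)$ or by the structural axioms asserted in $\mathrm{tr}_F$, together with the substitution lemma.
\item Membership premises: a membership derivation $t:s$ is matched by $\mathrm{is\_s}(\mathrm{tr}_T(t))$, proved by the same induction.
\end{itemize}

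\textbf{Completeness ($2\Rightarrow 1$).} I would define an erasure map $\epsilon$ from many-sorted terms over $\gamma$ to $T_\Sigma(X)$. It deletes every $\mathrm{Cast}$ symbol and replaces each $f^{*}$ by itself, reading it as an order-sorted symbol. Then $\epsilon(\mathrm{tr}_T(t)) = t$ holds syntactically. It remains to show that every equation in $\beta$ becomes an $\ecal$-theorem under $\epsilon$:
\begin{itemize}
\item core equalities erase to trivial identities $x = x$;
\item the structural axioms erase to $A$;
\item the translated equations erase to $E$.
\end{itemize}
Because $\epsilon$ commutes with substitution and congruence, induction on the many-sorted derivation gives $\epsilon(v) =_\ecal \epsilon(w)$ whenever $v =_\beta w$. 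Setting $v = \mathrm{tr}_T(t)$ and $w = \mathrm{tr}_T(u)$ gives $t =_\ecal u$.

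\textbf{Main obstacle.} I expect the hardest step to be ensuring that $\epsilon$ produces well-formed order-sorted terms of the right kind, and that conditional axioms and memberships transfer. An erased term may be well-kinded but have no sort, so it lies only at the kind level. I would handle this by working in the kinded extension of $\Sigma$, where equality $=_\ecal$ on $T_{\Sigma,k}(X)$ is still meaningful. For conditional equations and memberships $t:s \textbf{ if } P$, I would need to check that $\epsilon$ maps a derivation of $\mathrm{is\_s}(v)$ back to a derivation of $\epsilon(v):s$. I would do this by a simultaneous induction with the equational statement.

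Sufficient completeness is used where a variable-free argument is needed. If some step only goes through for ground instances, for example when matching a Maude-side membership to its $\mathrm{is\_s}$ counterpart, I would reduce to constructor terms in $T_\Omega$. On constructor terms, the sort of a term is determined syntactically and agrees with the predicate.
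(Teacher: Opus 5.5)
Your soundness direction ($1\Rightarrow 2$) is essentially the paper's argument: induction on the derivation, $\mathrm{tr}_T$ commuting with substitution up to core equalities, and congruence through casts. It breaks at the same point as the paper's replacement case.

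\textbf{The gap in $1\Rightarrow 2$.} Write $\mathrm{ls}$ for least sort; $\mathrm{tr}_T(t)$ has sort $\mathrm{tr}_S(\mathrm{ls}(t))$. So your invariant can only mean that $\mathrm{tr}_T(t)$ and $\mathrm{tr}_T(u)$ become $\beta$-equal after casting to some common supersort. Transitivity pushes that sort up, because the middle term of $t=v=u$ may have a larger least sort than $t$ and $u$. Your congruence lemma then needs the equality at the argument sort $s_i$ of $f^{*}$, which can lie strictly below the sort where the hypothesis holds. Casting up cannot get you there, and nothing in $\beta$ lets you cancel a cast: casts are uninterpreted, need not be injective, and core equalities only identify cast chains with the same source and target.

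Here is a counterexample to the statement itself. Take sorts $A<B$, constants $a,c$ of sort $A$ and $b$ of sort $B$, an operator $g:A\to A$, and $E=\{a=b,\ b=c\}$.
\begin{itemize}
\item No symbol is overloaded, so $\Sigma$ is strictly sensible, and $\ecal$ is trivially sufficiently complete for $\Omega=\Sigma$.
\item $g(a)=_{\ecal}g(c)$, by transitivity through $b$ followed by congruence.
\item $\beta=\{\mathrm{Cast}_{A\to B}(a)=b,\ b=\mathrm{Cast}_{A\to B}(c)\}$ has a model in which $\mathrm{Cast}_{A\to B}$ sends $a$ and $c$ to $b$, while $g(a)\neq g(c)$ and $\mathrm{Cast}_{A\to B}$ is injective on $\{g(a),g(c)\}$.
\end{itemize}
So $g(a)=g(c)$ is not $\beta$-provable at sort $A$, nor after casting to $B$. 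Hence $1\Rightarrow 2$ is false under the stated hypotheses, for your proof and for the paper's.

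\textbf{What is needed to repair it.} You need an extra hypothesis and a different shape of proof. Assume what the paper only states informally in Section~2: the oriented equations are sort-decreasing and Church--Rosser modulo $A$. You also need some condition stopping $A$-steps from raising sorts, since identity axioms such as $x=x\mathbin{\mathtt{++}}\mathtt{nil}$ can. Then translate valley proofs $t\to^{*}w\leftarrow^{*}u$ step by step, rather than arbitrary derivations. Under strong sensibility the least sort of a non-variable term is fixed by its head symbol. So each step $C[\theta l]\to C[\theta r]$ replaces a subterm by one of smaller or equal least sort, and the translated step stays at the hole's sort, where your substitution and cast-normalisation lemmas apply.

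Your membership clause should also go. $\mathrm{tr}_M$ emits only the membership axioms; the Peano translation asserts no $\mathrm{is\_even}$ fact for $\mathtt{zero}$. Moreover, Athena domains are unrelated to the $\mathrm{is\_s}$ predicates. A sort obtained through a membership therefore can never instantiate a sorted variable of a translated equation. The statement has to be restricted to theories where memberships do not feed the equational part.

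\textbf{The completeness direction.} Your $2\Rightarrow 1$ argument is correct and genuinely different from the paper's. The paper argues conservativity and invokes the bijection of initial algebras from~\cite{li2018method}, together with sufficient completeness. Your erasure $\epsilon$ simulates many-sorted derivations by $\ecal$-derivations directly. It needs neither sufficient completeness nor that result, and it handles open terms; an initial-algebra argument only covers ground classes, although the theorem is about $T_{\Sigma,k}(X)$. What the paper's route would buy, if carried out, is a model-level correspondence, which is what matters for the induction principles of the next section.

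Your ``main obstacle'' is not one. By induction on terms, $\epsilon$ sends a term of sort $\mathrm{tr}_S(s)$ to an order-sorted term of sort $s$: each $f^{*}$ is an actual declaration in $F$, and $\mathrm{Cast}_{s\to s'}$ exists only for $s\le s'$. So $\epsilon\circ\sigma$ is a genuine order-sorted substitution and no kind-level terms arise. Since the $\mathrm{is\_s}$ clauses are unusable in many-sorted equational logic, no simultaneous membership induction is needed either.
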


\begin{proof}
  The equational fragment of the translation (components
  $\mathrm{tr}_S$, $\mathrm{tr}_F$, $\mathrm{tr}_T$, and $\mathrm{tr}_E$)
  instantiates the strictly sensible order-sorted--to--many-sorted
  mapping~\cite{li2018method}. The membership component
  $\mathrm{tr}_M$ introduces predicate symbols and assertions of the form
  $\mathrm{is\_s}(\mathrm{tr}_T(t))$ and
  $\mathrm{tr}_T(P) \Rightarrow \mathrm{is\_s}(\mathrm{tr}_T(t))$;
  because $\mathrm{is\_s}$ is a predicate symbol, no
  assertion produced by $\mathrm{tr}_M$ can have the form
  $t' = u'$ for terms $t',u'$ of a non-Boolean sort, so $\mathrm{tr}_M$
  does not introduce new equalities between translated terms. Hence, the
  equational content of $\beta$ is precisely the image of $E$ under
  $\mathrm{tr}_E$, the structural axioms of $A$ under $\mathrm{tr}_F$,
  together with the core equalities.

  \begin{itemize}[label=$\diamond$, leftmargin=10pt]
    \item $t =_\ecal u \Rightarrow \mathrm{tr}_T(t) =_{\beta} \mathrm{tr}_T(u)$

    The proof proceeds by structural induction on the derivation of
    $t =_\ecal u$, case-splitting on the last inference rule applied.

    \begin{itemize}[leftmargin=10pt]
      \item \textbf{Axiom instantiation.}
            Suppose $t =_\ecal u$ is obtained by applying a substitution~$\theta$
            to an equation $l = r$ in $E \cup A$, so that $t = \theta(l)$ and
            $u = \theta(r)$. By $\mathrm{tr}_E$, the translated equation
            $\mathrm{tr}_T(l) = \mathrm{tr}_T(r)$ is in $\beta$, universally
            quantified over all free variables. Because $\mathrm{tr}_T$ is defined
            compositionally and the cast
            insertion at each argument position depends only on the sort of the subterm
            and the expected sort of the position, $\mathrm{tr}_T$ commutes with
            substitution: for each variable $x$, define
            $\hat\theta(x) = \mathrm{tr}_T(\theta(x))$; then
            $\mathrm{tr}_T(\theta(t')) = \hat\theta(\mathrm{tr}_T(t'))$ for any
            term~$t'$, possibly up to core equalities in $\beta$ that identify
            different casting paths to the same target sort. Hence,
            $\hat\theta(\mathrm{tr}_T(l)) =_\beta \hat\theta(\mathrm{tr}_T(r))$
            follows by instantiation of the universally quantified axiom in $\beta$.

      \item \textbf{Reflexivity, symmetry, transitivity.}
            These are inference rules of many-sorted equational logic and
            thus hold natively in Athena's deductive system.

      \item \textbf{Replacement.}
            Suppose $t = C[t']$ and $u = C[u']$ for a context~$C$ and
            $t' =_\ecal u'$, and assume by the induction hypothesis that
            $\mathrm{tr}_T(t') =_\beta \mathrm{tr}_T(u')$.
            The translation $\mathrm{tr}_T$ maps $C$ to a context $C'$ in
            the many-sorted signature~$\gamma$: each function symbol in $C$ is
            replaced by its maximal representative $f^*$, and explicit casts are
            inserted at every argument position where a subsort coercion is needed.
            Because each cast symbol $\mathrm{Cast}_{s \to s'}$ is a
            function symbol in $\gamma$, the standard congruence rule of
            many-sorted equational logic guarantees that if
            $a =_\beta b$ then $\mathrm{Cast}_{s \to s'}(a) =_\beta
            \mathrm{Cast}_{s \to s'}(b)$. Moreover, if $t'$ and $u'$ require different
            casting chains to reach the expected sort of the hole in $C'$, the core
            equalities (added by $\mathrm{tr}_E$ for every composable
            pair $s \leq s' \leq s''$) ensure that both chains yield
            equal results in $\beta$. Hence,
            $\mathrm{tr}_T(C[t']) = C'[\mathrm{tr}_T(t')]
              =_\beta C'[\mathrm{tr}_T(u')] = \mathrm{tr}_T(C[u'])$
            by congruence in many-sorted equational logic.
    \end{itemize}

    Since every inference rule used in the derivation of $t =_\ecal u$
    is preserved, $\mathrm{tr}_T(t) =_\beta \mathrm{tr}_T(u)$.

    \item $\mathrm{tr}_T(t) =_{\beta} \mathrm{tr}_T(u) \Rightarrow t =_\ecal u$

    It suffices to show that $\beta$ is a conservative extension of the
    image of~$E \cup A$ with respect to equalities between translated terms.
    The axioms of~$\beta$ fall into three classes:
    \begin{enumerate}[label=(\alph*)]
      \item translated equations $\mathrm{tr}_T(l) = \mathrm{tr}_T(r)$ for
            each $l = r$ in $E \cup A$;
      \item core equalities
            $\mathrm{Cast}_{s' \to s''}(\mathrm{Cast}_{s \to s'}(x))
            = \mathrm{Cast}_{s \to s''}(x)$ for every composable pair
            $s \leq s' \leq s''$;
      \item membership assertions produced by $\mathrm{tr}_M$.
    \end{enumerate}

    As argued above, class (c) consists entirely of Boolean-sorted
    sentences and cannot produce new equalities between non-Boolean
    terms.
     
    For classes (a) and (b), the argument proceeds as follows:
    \begin{itemize}[leftmargin=10pt]
      \item \textbf{Unambiguous operator mapping.}
            Strict sensibility guarantees that each argument compatibility
            class $[f]_{\mathrm{ac}}$ has a unique maximal representative $f^*$.
            The translation maps every $f$ in the same class to the
            \emph{same} $f^*$ and \emph{omits} all others. Because the
            selection is unique, no two originally distinct function symbols
            are merged by the translation, therefore no invalid equalities between
            terms arise from the operator mapping.

      \item \textbf{Conservative cast equalities.}
            Each cast symbol $\mathrm{Cast}_{s \to s'}$ is a fresh
            uninterpreted function symbol in $\gamma$, distinct for each pair
            $s < s'$.  The only axioms governing cast symbols are the core
            equalities in class~(b), which encode the transitivity and path
            independence of the subsort relation.
            Because the subsort relation $\leq$ is already part of $\ecal$, the core
            equalities do not identify terms that are not already related by
            subsorting.

      \item \textbf{Isomorphism of initial algebras.}
            According to~\citet{li2018method}, the strictly sensible mapping
            combined with the core casting equalities induces a bijection
            between the equivalence classes of the initial order-sorted
            algebra $T_{\Sigma/(E \cup A)}$ and those of the initial
            many-sorted algebra $T_{\gamma/\beta_{\mathrm{eq}}}$, where
            $\beta_{\mathrm{eq}}$ denotes the equational part of $\beta$
            (classes~(a) and~(b)). The sufficient completeness assumption
            ensures that every ground term reduces to a constructor term,
            which is needed for the bijection to be an isomorphism of
            algebras.
    \end{itemize}
     
    Therefore, if $\mathrm{tr}_T(t) =_\beta \mathrm{tr}_T(u)$, then
    $\mathrm{tr}_T(t)$ and $\mathrm{tr}_T(u)$ lie in the same
    equivalence class of $T_{\gamma/\beta_{\mathrm{eq}}}$. By the
    isomorphism, $t$ and $u$ belong to the same class in
    $T_{\Sigma/(E \cup A)}$, \ie $t =_\ecal u$.
  \end{itemize}
\end{proof}



\section{Inductive Reasoning}
\label{sec:induction}

The translation function $\tr$ from $(\Sigma, E \cup A)$ to $(\beta, \gamma)$
presented in \fref{sec:translation} enables equational reasoning over
Maude specifications within Athena by preserving the underlying
equational theory in a many-sorted first-order logic setting.
However, equational reasoning alone is often insufficient to establish
properties of interest, which typically require some form of inductive
reasoning.
This section explains how equational reasoning with $\tr(\ecal)$ in
Athena can be extended with inductive reasoning based on parametric
induction principles. The key idea is that for the induction principle to
be applied, it must be instantiated using the structure of sorts in
$\Sigma$ associated with each connected component in the poset
$(S, \leq)$.
To illustrate this approach, this section presents a concrete
instantiation of such a parametric induction principle, showing how
structural induction can be recovered and applied in practice.

In Athena, there is native support for structural, constructor-based
induction over datatypes, which allows properties to be proved by
reasoning directly on the inductive structure generated by datatype
constructors. However, the translation function $\tr$ systematically
represents many Maude sorts as domains, rather than datatypes, in
order to preserve the structure induced by subsorting in the
many-sorted setting of Athena. As a consequence, Athena's built-in
induction mechanisms are no longer directly applicable, and native
datatype induction is insufficient for reasoning about the translated
specifications.
To address this limitation, the approach adopted here is to equip
Athena with induction principles defined over domains, rather than
relying solely on datatype induction. These principles are formulated
in a parametric way, allowing different induction schemes to be
instantiated depending on the structure of the sorts under
consideration. In particular, this includes natural extensions of
Athena's native structural induction to domains that arise from
translated equational specifications.

More generally, an induction principle in this setting is realized by
declaring a primitive method in Athena that explicitly encodes both
the basis cases and the inductive cases associated with a given sort
structure $(S, \leq)$.
For a selected induction principle $\eta$, the primitive method is
defined separately for each connected component of the sort hierarchy,
reflecting the constructors and subsort relations that characterize
that component. The basis cases correspond to the minimal elements or
constructors of the component, while the inductive cases capture how
the property is preserved by the relevant constructor
applications. Once declared, such a primitive method can be applied
uniformly to predicates over the corresponding domain, thereby
extending equational reasoning with inductive reasoning in a
principled and modular way.

The following definition extends the translation function (\fref{def:translation}) with a
sixth component, $\mathrm{tr}_\eta$, that formalizes the generation of structural
induction principles. It operates on the execution environment~$\rho$---previously noted
as orthogonal to the generated representation $(\beta,\gamma)$, but now required to host
the primitive methods that realize induction.

\begin{definition}[Induction Method Translation $\mathrm{tr}_\eta$] \label{def:induction_translation}
  Let $\ecal$, $\Sigma$, $\Omega$, and $(S,\leq)$ be as in \fref{def:translation}. The
  translation function $\tr$ is extended with:

  \begin{enumerate}[label=\textbf{(\roman*)}] \setcounter{enumi}{5}
    \item \textbf{Induction method translation
      ($\mathrm{tr}_\eta$).}
      For each kind $k$ in $(S,\leq)$, let
      $C = \{\, s \in S \mid
      \exists\, c \in \Omega.\;
      \mathrm{target}(c) = s \,\}
      \cap [k]_{\leq}$
      be the set of sorts in the connected component of
      $k$ that are directly generated by
      $\Omega$-constructors.
      Define the \emph{effective constructor set}
      \[
        \Omega_C^{+} \;=\; \Omega_C \;\cup\;
        \{\, \mathrm{Cast}_{s \to s'} \mid
          s \notin C,\; s' \in C,\;
          s \leq s' \,\},
      \]
      where
      $\Omega_C =
      \{ c \in \Omega \mid \mathrm{target}(c) \in C \}$
      and the second component adds subsort injection
      casts from sorts outside $C$ into $C$.
      By construction, $\Omega_C^{+}$ is jointly
      exhaustive: every constructor normal form of kind
      $k$ is headed by some element of
      $\Omega_C^{+}$.
      For each
      $c : s_1 \times \cdots \times s_n \to s$ in
      $\Omega_C^{+}$, let $I_c =
      \{\, i \mid 1 \leq i \leq n,\;
      s_i \in C \,\}$ be the set of
      \emph{recursive argument positions}.
      The component $\mathrm{tr}_\eta$ generates a
      primitive method that takes a predicate
      $P : \mathrm{tr}_S(k) \to \mathtt{Boolean}$
      and encodes the obligations of
      $\eta_C(P)$ as follows.

    \begin{enumerate}[label=(\alph*)]
      \item \emph{Base-case sentences.}
        \begin{enumerate}[label=(\arabic*)]
          \item For each constant $c : \to s$ in $\Omega_C^{+}$ (0-arity, so $I_c = \varnothing$),
            define
            \[
              b_c \;:=\;
              P\!\bigl(
                \mathrm{Cast}_{s \to k}(c)
              \bigr).
            \]

          \item For each constructor
            $c : s_1 \times \cdots \times s_n \to s$
            in $\Omega_C^{+}$ with $n \geq 1$ and
            $I_c = \varnothing$, define
            \[
              b_c \;:=\;
              \forall\, x_1 \!\cdots\, x_n
              \;.\;
              P\!\bigl(
                \mathrm{Cast}_{s \to k}
                (c(x_1,\ldots,x_n))
              \bigr),
            \]
            where each
            $x_i : \mathrm{tr}_S(s_i)$.
        \end{enumerate}

      \item \emph{Inductive-case sentences.}
        For each constructor
        $c : s_1 \times \cdots \times s_n \to s$ in $\Omega_C^{+}$ with $I_c \neq \varnothing$,
        define the sentence
        \[
          h_c \;:=\;
          \forall\, x_1 \!\cdots\, x_n \;.\;
          \Bigl(\!\bigwedge_{i \in I_c}
          P\!\bigl(\mathrm{Cast}_{s_i \to k}(x_i)\bigr)
          \Bigr)
          \;\Rightarrow\;
          P\!\bigl(\mathrm{Cast}_{s \to k}
            (c(x_1,\ldots,x_n))\bigr),
        \]
        where each $x_i : \mathrm{tr}_S(s_i)$. When $|I_c| = 1$, the conjunction reduces to a
        single hypothesis.

      \item \emph{Primitive method assembly.}
        Let $B = \{b_c\}$ and $H = \{h_c\}$ be the collected base-case and inductive-case
        sentences, and let $(o_1, \ldots, o_m)$ be an enumeration of all sentences in
        $B \cup H$. The primitive method for component~$C$ is added to the execution
        environment $\rho$ as
        \[
          \rho \;\cup\;
          \bigl\{\,
            \mathtt{prim\text{-}method}_{C}(P)
            \;:\;
            \mathtt{check}_m
            \;\Rightarrow\;
            \forall\, x : \mathrm{tr}_S(k).\; P(x)
          \,\bigr\}
          \;\mapsto\; \rho,
        \]
        where each obligation is verified sequentially through nested $\mathtt{check}$
        expressions: $\mathtt{check}_1$ tests $\mathtt{holds?}(o_1)$ and, upon success,
        enters $\mathtt{check}_2$, which tests $\mathtt{holds?}(o_2)$, and so on. In general,
        for $i = 1, \ldots, m$, $\mathtt{check}_i$ tests $\mathtt{holds?}(o_i)$ and proceeds
        to $\mathtt{check}_{i+1}$; when $i = m$, the innermost check concludes
        $\forall\, x : \mathrm{tr}_S(k).\; P(x)$. If any $\mathtt{holds?}(o_i)$ fails, the
        method signals an error.
    \end{enumerate}

    The full translation thus maps $\ecal \mapsto (\beta, \gamma, \rho)$, extending the
    original pair with the execution environment populated by the generated primitive
    methods.
  \end{enumerate}
\end{definition}

As an example, consider the primitive method
in \fref{lst:athena_primitive_induction}. It
defines \athenacode{nat-induction}, a structural induction principle
for the \athenacmd{Nat} domain in \fref{lst:peano_athena}.
It explicitly captures the two components of a standard induction
proof. The first definition, labeled
\athenacode{basis}, corresponds to the sentence of the basis case over the constant
constructor \athenacode{Cast_Even_to_Nat zero}. The second definition, labeled~
\athenacode{ic}, encodes the
inductive case: it states that if the property holds for an
arbitrary \athenacmd{Nat} element \athenacode{x} (inductive
hypothesis), then it needs to hold for its successor,
\athenacode{Cast_NzNat_to_Nat (s x)}.
The \athenacode{check} expression enforces both conditions. It first
verifies that the basis case holds, and if so, proceeds to check the
inductive case. This mechanism effectively restores structural
induction for domains translated from Maude datatypes, allowing the
proof of properties such as associativity or commutativity
complementing the built-in \athenacode{by-induction} method.

\begin{athena}[caption={Primitive induction method for Nat.}, label={lst:athena_primitive_induction}]
primitive-method (nat-induction property) :=
  let {
    basis := (property (Cast_Even_to_Nat zero));
    ic := (forall x (if (property x) (property (Cast_NzNat_to_Nat (s x)))))
    }
    check { (holds? basis) =>
              check {(holds? ic) => (forall x (property x))
                    | else => (error "Inductive step does not hold.")}
              | else => (error "Basis step does not hold.")}
\end{athena}

The following theorem establishes the soundness of inductive proofs carried out using the
primitive methods generated by $\mathrm{tr}_\eta$.

\begin{theorem}[Soundness of Inductive Proofs]
  \label{theo:soundness_induction}
  Let $\ecal = (\Sigma, E \cup A)$ be a membership equational theory that is sufficiently
  complete a connected component of $(S,\leq)$ with kind $k$, and $P : \mathrm{tr}_S(k) \to \mathtt{Boolean}$ a
  predicate. If the primitive method $\mathtt{prim\text{-}method}_C(P)$ generated by
  $\mathrm{tr}_\eta$ (\fref{def:induction_translation}) checks all base cases and
  inductive steps and concludes $\forall\, x : \mathrm{tr}_S(k).\; P(x)$, then for every 
  ground term $t$ of sort $s \in C$ in $T_{\Sigma,k}$, the property $P(\mathrm{tr}_T(t))$
  holds in Athena under the theory $\mathrm{tr}(\ecal) = (\beta,\gamma,\rho)$.
\end{theorem}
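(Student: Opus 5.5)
Once the method has checked all obligations and concluded $\forall\, x : \mathrm{tr}_S(k).\; P(x)$, every ground instance follows by universal instantiation. The real content is to show that this conclusion is justified, i.e., that $P(\mathrm{tr}_T(t))$ is entailed by the verified obligations $B \cup H$ together with $\beta$ for every ground $t$ of sort $s \in C$. Instantiating the quantified conclusion at $\mathrm{Cast}_{s \to k}(\mathrm{tr}_T(t))$ (or at $\mathrm{tr}_T(t)$ itself when $s = k$) then gives the statement, with the cast chain normalized by the core equalities. I will therefore argue directly, without relying on the primitive method's output, by well-founded induction over constructor terms.

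\textbf{Step 1 (reduction to constructor terms).} Let $t \in T_{\Sigma,s}$ with $s \in C$. By sufficient completeness w.r.t.\ $\Omega$, there is $u \in T_{\Omega,s}$ with $t =_\ecal u$. By \fref{theo:correquality} (direction $1 \Rightarrow 2$), $\mathrm{tr}_T(t) =_\beta \mathrm{tr}_T(u)$. Congruence then transports $P$ across this equality, so it suffices to prove $P(\mathrm{Cast}_{s\to k}(\mathrm{tr}_T(u)))$ for constructor terms $u$.

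\textbf{Step 2 (structural induction on $u$).} I proceed by induction on the size of $u \in T_{\Omega}$ of a sort in $[k]_\leq$, proving the stronger claim that $P(\mathrm{Cast}_{s'\to k}(\mathrm{tr}_T(u)))$ holds for every sort $s'$ of $u$ lying in $C$. Write $u = c(u_1,\ldots,u_n)$. By the exhaustiveness of $\Omega_C^+$ stated in \fref{def:induction_translation}, the translated head is either some $c \in \Omega_C$, or a cast $\mathrm{Cast}_{s''\to s'}$ from a sort $s'' \notin C$ applied to a term of sort $s''$.
\begin{itemize}
\item If $I_c = \varnothing$, the base sentence $b_c$, instantiated at $\mathrm{tr}_T(u_i)$, gives the claim.
\item Otherwise, each recursive argument $u_i$ with $i \in I_c$ has sort $s_i \in C$ and is strictly smaller. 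The induction hypothesis yields $P(\mathrm{Cast}_{s_i\to k}(\mathrm{tr}_T(u_i)))$, and modus ponens with the instantiated $h_c$ concludes.
\end{itemize}
Here I must check that the casts inserted by $\mathrm{tr}_T$ at argument positions agree with those in $b_c$ and $h_c$. When the actual sort of $u_i$ is a proper subsort of $s_i$, the term carries a cast chain. The core equalities collapse $\mathrm{Cast}_{s_i\to k}\circ\mathrm{Cast}_{s_i'\to s_i}$ to $\mathrm{Cast}_{s_i'\to k}$, which is why the claim is stated for all sorts in $C$.

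\textbf{Main obstacle.} The hard part is the cast bookkeeping combined with the exhaustiveness claim. I need a term of sort $s \in C$ whose least sort lies outside $C$ to be covered by some $\mathrm{Cast}_{s''\to s'}$ in $\Omega_C^+$. Its arguments of sort $s''$ are non-recursive ($s'' \notin C$), so they fall under a base case (2), with $x$ ranging over all of $\mathrm{tr}_S(s'')$. That covers them without needing any induction on $s''$. I would also need to handle the convention $\mathrm{Cast}_{k\to k} = \mathrm{id}$ and confirm that membership-defined sorts (like \texttt{Even} via \texttt{cmb}) do not add constructor shapes beyond $\Omega_C^+$. Since memberships only contribute Boolean predicates (as argued in the proof of \fref{theo:correquality}), the ground term shapes are governed by $\Omega$ alone, and that closes the induction.
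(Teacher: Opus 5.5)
Your proposal is correct and follows essentially the same route as the paper: reduce to a constructor normal form via sufficient completeness and Theorem~\ref{theo:correquality}, then induct on the number of constructor applications, discharging base cases with $b_c$ and inductive steps by instantiating $h_c$ and applying modus ponens. The paper wraps this induction in a proof by contradiction that does no real work and instantiates $h_c$ directly at $\mathrm{tr}_T(t_i)$, whereas you make the cast-chain bookkeeping explicit through the core equalities and a strengthened induction hypothesis; this is a refinement of the same argument, not a different one.
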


\begin{proof}
  Assume, for contradiction, that there exists a ground
  term~$t$ of sort $s \in C$ in $T_{\Sigma,k}$ such that
  $P(\mathrm{tr}_T(t))$ does not hold, even though all
  obligations of
  $\mathtt{prim\text{-}method}_C(P)$ have been checked.
  Because $\ecal$ is sufficiently complete, $t$ reduces to a
  constructor normal form headed by some $c \in \Omega$ with
  $\mathrm{target}(c)$ in the connected component of~$k$.  If
  $\mathrm{target}(c) \in C$, then $c \in \Omega_C \subseteq
  \Omega_C^{+}$; otherwise $c$ is an injection from an external sort
  $s \notin C$ with $s \leq s'$ for some $s' \in C$, so $c =
  \mathrm{Cast}_{s \to s'}$ belongs to the second component of
  $\Omega_C^{+}$.  In either case $c \in \Omega_C^{+}$.  By
  \fref{theo:correquality}, $P(\mathrm{tr}_T(t))$ holds if and only if
  $P$ holds on the translated normal form, so it may be assumed
  without loss of generality that $t = c(t_1,\ldots,t_n)$ is already
  in constructor normal form.
  It proceeds by induction on the number of constructor
  applications in~$t$.

  \begin{description}[leftmargin=0pt]
    \item[Base case:] ($I_c = \varnothing$).
      The constructor $c$ has no argument positions of sorts in $C$, so $t$ contains exactly
      one constructor application. The primitive method verified the base-case sentence
      $b_c$ (\fref{def:induction_translation}). Whether $c : \to s$ is a constant (in which
      case $b_c$ directly asserts $P(\mathrm{Cast}_{s \to k}(c))$) or
      $c : s_1 \times \cdots \times s_n \to s$ with $n \geq 1$ and $I_c = \varnothing$ (in
      which case $b_c$ is universally quantified and instantiates at $\mathrm{tr}_T(t_1),\ldots,
      \mathrm{tr}_T(t_n)$) the verified sentence $b_c$ yields $P(\mathrm{tr}_T(t))$.

    \item[Inductive step:]
      ($I_c \neq \varnothing$). For each $i \in I_c$, the subterm~$t_i$ is a ground term
      of sort $s_i \in C$ with strictly fewer constructor applications than $t$. By the
      induction hypothesis, $P(\mathrm{tr}_T(t_i))$ holds for every $i \in I_c$. The
      primitive method verified the inductive-case sentence $h_c$
      (\fref{def:induction_translation}), which asserts that the conjunction of
      $P(\mathrm{Cast}_{s_i \to k}(x_i))$ for $i \in I_c$ implies $P(\mathrm{Cast}_{s \to k}(
      c(x_1,\ldots,x_n)))$.  Instantiating at $\mathrm{tr}_T(t_1),\ldots,
      \mathrm{tr}_T(t_n)$ and applying modus ponens with the induction hypotheses yields
      $P(\mathrm{tr}_T(t))$.
  \end{description}

  \noindent
  In both cases $P(\mathrm{tr}_T(t))$ holds, contradicting the assumption. Since $t$ was
  arbitrary, $P(\mathrm{tr}_T(t))$ holds for every ground term of kind $k$.
\end{proof}



\section{Case Study}
\label{sec:example}

The case study verifies inductive properties of a compiler for numeric
expressions on a Stack Machine~\cite{arkoudas17}.  The case study
relies heavily on order-sorted features of data types and includes
structural axioms.\footnote{The complete Maude specification of the
compiler with the translated equationally equivalent Athena program
(with proofs) is available at:
\url{https://github.com/FLAGlab/Maude2Athena}} \fref{lst:toy_compiler}
shows the Maude specification, defining three sorts: a source language
of arithmetic expressions (\maudecmd{Exp}), a target instruction set
(\maudecmd{Instr}/\maudecmd{Program}), and a stack machine
(\maudecmd{Stack}).

\begin{maude}[label={lst:toy_compiler}, caption={Toy compiler in Maude (shorten).}]
fmod TOY-COMPILER is
  protecting INT .
  sort Exp .
  subsort Int < Exp .
  op _plus_ : Exp Exp -> Exp [ctor] .
  op _minus_ : Exp Exp -> Exp [ctor] .
  op _mult_ : Exp Exp -> Exp [ctor] .
  sorts Instr Program Stack .
  subsort Instr < Program .
  op push : Int -> Instr [ctor] .
  ops add sub mult : -> Instr [ctor] .
  op nil : -> Program [ctor] .
  op _++_ : Program Program -> Program [ctor assoc id: nil] .
  op empty : -> Stack [ctor] .
  op _::_ : Int Stack -> Stack [ctor] .
  op I : Exp -> Int .         --- interpreter
  op compile : Exp -> Program .
  op exec : Program Stack -> Stack .
  vars N N1 N2 : Int .  vars E1 E2 : Exp .
  var P : Program .     var S : Stack .
  eq I(N) = N .
  eq I(E1 plus E2) = I(E1) + I(E2) .
  --- I equations for minus, mult analogous
  eq compile(N) = push(N) . `\label{ln:compile_push}`
  eq compile(E1 plus E2) =
     compile(E2) ++ compile(E1) ++ add .
  --- compile equations for minus, mult analogous
  eq exec(nil, S) = S .
  eq exec(push(N) ++ P, S) = exec(P, N :: S) . `\label{ln:exec_push}`
  eq exec(add ++ P, N1 :: N2 :: S) =
     exec(P, (N1 + N2) :: S) .
  --- exec equations for sub, mult analogous
  ...
endfm
\end{maude}

The compiler specification relies on two order-sorted features. First, \maudecmd{Int < Exp} lets
integers appear directly as expressions without an explicit injection constructor.
Second, \maudecmd{Instr < Program} lets single instructions be used where programs are
expected. The operator \maudecmd{_++_} carries the structural axioms \maudecmd{assoc}
and \maudecmd{id: nil}.

To preserve the semantics of \maudecmd{Int < Exp}, the translation introduces explicit
cast functions for every subsort relation in the sort poset, as explained in
\fref{sec:translation}. \mtoa automatically inserts these casts.  For example, the
equations in lines \fref{ln:compile_push} and \fref{ln:exec_push}
are translated to:

\begin{athenainline}
    declare push : [Int] -> Instr
    declare compile : [Exp] -> Program
    declare Cast_Int_to_Exp : [Int] -> Exp
    declare Cast_Instr_to_Program : [Instr] -> Program
    assert* eq_4 := ((compile (Cast_Int_to_Exp N)) 
                        = (Cast_Instr_to_Program (push N)))
    assert* eq_9 := ((exec (++ (Cast_Instr_to_Program (push N)) P) S)
                        = (exec P (:: N S)))
\end{athenainline}

Since Athena requires unique function symbols and does not support
Maude's mixfix notations natively, the tool flattens operators and
resolves overloading.  Specifically, the mixfix Maude operator
\maudecmd{_plus_} is translated to a standard prefix function in
Athena. If overloading collisions are detected, unique identifiers are
generated based on sort signatures:

\begin{athenainline}
    declare plus : [Exp Exp] -> Exp
    declare ++ : [Program Program] -> Program
    assert* ((compile (plus E1 E2)) =
             (++ (compile E2)
                 (++ (compile E1) (Cast_Instr_to_Program add))))
    define [_v1 _v2 _v3 _v4] :=
           [?_v1:Program ?_v2:Program ?_v3:Program ?_v4:Program]
    assert* assoc_++ := ((++ (++ _v1 _v2) _v3) = (++ _v1 (++ _v2 _v3)))
    assert* left_id_++ := ((++ nil _v4) = _v4)
    assert* right_id_++ := ((++ _v4 nil) = _v4)
\end{athenainline}
This listing also illustrates how the structural axioms $A$ are handled. The operator
\maudecmd{_++_} is declared with the attributes \maudecmd{assoc, id: nil}, meaning that
rewriting is performed modulo associativity and identity. Because Athena has no built-in
support for such axioms, $\tr_F$ translates each structural axiom into an explicit
equational assertion: \athenacode{assoc_++} encodes associativity, while
\athenacode{left_id_++} and \athenacode{right_id_++} encode the left and right identity
axioms for \athenacode{nil}. In general, any combination of associativity, commutativity,
and identity (ACU) attributes attached to operators in $A$ is translated by $\tr_F$ into
the corresponding set of universally quantified assertions in $\beta$.

The tool analyzes the signature to determine which sorts can be
modeled as \athenacmd{datatypes} (allowing native induction) and which
must be \athenacmd{domains}.  In this example, the Maude \maudecmd{Stack} is
translated as an Athena datatype:
\begin{athenainline}
    datatype Stack := empty | (:: Int Stack)
\end{athenainline}
However, \maudecmd{Exp} and \maudecmd{Program} are translated as
domains to accommodate the subsorting relations that are otherwise
incompatible with standard datatype constructors.

The result of this translation is a fully typed Athena module where
the correctness of the compiler (\athenacmd{forall e . (exec (compile e)) = (I e)}) can be rigorously proven. The core
verification goal is the following theorem, which asserts that
compiling an expression \athenacmd{e} followed by a program
\athenacmd{p} is equivalent to executing \athenacmd{p} on a stack with
the evaluated result of \athenacmd{e}:\footnote{The correctness property follows
trivially as a corollary of this theorem by instantiating it to the empty program
on an empty stack.}

\begin{athenainline}
    (forall ?p ?s .   (exec (++ (compile ?e) ?p) ?s) = 
                      (exec ?p (:: (I ?e) ?s)))
\end{athenainline}

Note that this sentence cannot be asserted as a logical sentence
because \athenacmd{Exp} is a domain, not a datatype. Instead, it is
defined as a predicate over expressions.  This enables the theorem to
be passed as an argument to the induction method of choice.  In this
case, the predicate \athenacmd{correctness} is defined as:
\begin{athenainline}
    declare correctness: [Exp] -> Boolean
    assert* correctness_def := (iff (correctness ?e)
        (forall ?p ?s . 
            (exec (++ (compile ?e) ?p) ?s) = (exec ?p (:: (I ?e) ?s))))
\end{athenainline}

Proving that \athenacode{(forall ?e . correctness ?e)} holds, brings
to attention the challenges of reasoning over non-inductive domains
and handling explicit subsorting.

Note that in Maude, \maudecmd{Exp} is defined via subsorting ($Int <
Exp$), which prevents it from being translated into a standard Athena
\athenacmd{datatype}. Consequently, Athena's native
\athenacode{by-induction} method cannot be applied. To overcome this,
\mtoa automatically generates a custom \textbf{primitive method}
that reconstructs the inductive principle of the original Maude sort.
The generated method, \athenacode{exp-induction}, takes the
\maudecmd{correctness} predicate as an argument and strictly enforces
that the user proves the inductive step for every constructor (e.g.,
\athenacode{plus}, \athenacode{mult}) before discharging the goal:
\begin{athenainline}
primitive-method (exp-induction property) :=
    let {
        basis_n  := (forall ?n (property (Cast_Int_to_Exp ?n)))
        ic_plus  := (forall ?e1 (forall ?e2  
                        (if (and (property ?e1) (property ?e2)) 
                            (property (plus ?e1 ?e2)))));
        ic_minus := ...
        ic_mult  := ...
    }
    check { (holds? basis_n) =>   ...  }
\end{athenainline}

The translation replaces implicit subsorting with explicit cast
operators. This requires the proof script to handle these casts in the
basis cases. For instance, the basis case for integers is not merely
\athenacode{forall n}, but rather \athenacode{forall n} applied to the
cast \athenacode{Cast_Int_to_Exp}. The proof proceeds by unwrapping
these casts using the axioms generated by the translation (\eg
\athenacode{eq_4}):

\begin{athenainline}
define basis_step := (forall n . correctness (Cast_Int_to_Exp n))
conclude basis_step
  pick-any n:Int p:Program s:Stack
    (!chain [ (exec (++ (compile (Cast_Int_to_Exp n)) p) s)
            = (exec (++ (Cast_Instr_to_Program (push n)) p) s)   [eq_4]
            = (exec p (:: n s))                                  [eq_9]
            ... ])
\end{athenainline}

This explicit handling ensures that the proof is rigorous and
mathematically consistent with the original order-sorted
semantics. Finally, the proof is concluded by invoking the custom
induction method, which aggregates the base case and inductive steps
to discharge the main theorem: \athenacode{(!exp-induction correctness)}.


\section{Related Work}
\label{sec:related_work}

The formal verification of rewriting logic specifications typically
involves a trade off between the high performance execution provided
by environments like Maude and the interactive deductive capabilities
of theorem provers like Athena~\cite{arkoudas17}, Isabelle~\cite{wenzel2008isabelle}, 
or Lean~\cite{moura21}.
There are three main methodologies designed to address this challenge:
\begin{enumerate*}[label=(\arabic*)]
\item the development of native domain-specific theorem provers,
\item the application of automated symbolic reasoning techniques, and
\item the integration with general purpose proof assistants via
  translation.
\end{enumerate*}

The native approach, pioneered by the original Maude
\ac{ITP}~\cite{clavel2007all} and modernized by
NuITP~\cite{duran2024nuitp}, offers powerful automation using
variant-based reasoning directly at Maude's metalevel. While highly
effective for discharging goals automatically without translation,
tools like NuITP inherently generate proofs as metalevel search
traces. In contrast, \mtoa translates specifications into Athena to
leverage its natural-deduction framework, which closely mirrors
standard mathematical reasoning~\cite{arkoudas2005}. Rather than
competing on pure automation, the proposed approach emphasizes proof
transparency and structure. Athena's built-in methods (equality
chaining, structural induction, case analysis, and many others) allow
proofs to be developed and read as structured mathematical arguments
rather than execution traces or low-level tactic scripts, following
human-like proof
strategies~\cite{BRINGSJORD_GOVINDARAJULU_2021}. Consequently, \mtoa
occupies a complementary niche to native provers like NuITP: it
targets verification scenarios where transparent, independently
checkable, and human-readable proofs are the primary requirement.

A highly effective alternative to interactive deduction relies on
automated symbolic methods. Maude integrates with SMT solvers to
discharge proof obligations generated from
specifications.
Furthermore, techniques such as narrowing and its combination with SMT
solving allow for the analysis of infinite state
systems~\cite{rocha2014rewriting}.  Reachability Logic and its
associated tools also provide mechanisms to deductively reason about
Maude programs~\cite{romero2018reachability,arias2023symbolic}.  These
automated methodologies are essential for verifying safety properties
and analyzing state spaces, though they may not cover properties
requiring complex inductive arguments or higher order reasoning.

The third strategy is to translate specifications to an external
formalism to leverage its proof infrastructure. The Heterogeneous Tool
Set (Hets)~\cite{codescu2010integrating} treats Maude as an
institution, enabling translations to Isabelle/HOL via
CASL~\cite{astesiano2002casl}. While theoretically robust, this path
relies on complex encodings to simulate subsorting. Similarly,
approaches to translate order-sorted algebras into many-sorted algebras have been explored to
leverage standard theorem provers~\cite{li2018method,goguen1992order}.
The Maude2Lean framework~\cite{rubio2022maude2lean} presents a direct
translation from rewriting logic to the calculus of inductive
constructions. Unlike the institutional approach of Hets, Maude2Lean
translates the syntax of Maude terms and the semantics of rewriting
logic into inductive datatypes in Lean. This preserves the inductive
structure of the logic, allowing native structural induction on terms
and relations. However, such translations shift the first-order
theorem proving problem to a different domain (e.g., type theory or
higher order logic), sometimes incurring in a verbosity trade-off, as
proofs become arguments about the translated semantics rather than
direct reasoning about the domain model.


\section{Concluding Remarks}
\label{sec:conclusions}

This work presents \mtoa, a tool for translating order-sorted Maude
specifications into the many-sorted first-order logic of Athena. The
main contribution of this work is the reconciliation of symbolic
reasoning and equational rewriting with the power of inductive
reasoning, enabling specifications that leverage the best of both
worlds.
To the best of the authors' knowledge, this is the first
implementation grounded in the theoretical translation proposed
by~\citet{li2018method}. The proposed approach bridges the semantic
gap between the two formalisms by making subsorting explicit via cast
operators, modeling membership as predicates, and recovering structural induction through custom primitive
methods.
The $\mtoa$ framework was validated using several examples. It
includes the case study of a compiler specification, demonstrating the
approach's ability to handle deep inductive structures and operator
overloading, while preserving the equational consistency of the
original system.

The future work focuses on two key areas. First, to enhance modularity
by mapping Maude's built-in modules directly to Athena's native
domains. Such integration would allow developers to leverage Athena's
existing decision procedures and automated reasoning capabilities for
standard data types.
Second, extending \mtoa to handle rewrite rules to enable the
verification of concurrent and distributed systems. The current
version is restricted to functional modules, while rewrite rules are
the primary mechanism for defining concurrent state
transitions. Athena, being grounded in standard first-order logic,
lacks native primitives for modeling the non-determinism and
asynchrony inherent in such rules.  Rather than constructing a new
concurrency framework within Athena from scratch, the idea is to
directly work on top of Talcott's formalization of Actor theories in
rewriting logic~\cite{talcott2002actor}. By translating these
pre-validated semantic structures directly into Athena, Maude's system
modules can be mapped to logical transition relations. On the one
hand, this would significantly reduce the formalization burden.  On
the other hand, it will effectively equip Athena with the capability
to reason about temporal properties and distributed protocols without
needing to manually reconstruct the underlying theory of concurrency.


\paragraph{Acknowledgements.}
The authors would like to thank the anonymous reviewers for their very
insightful comments on an earlier version of this paper.
They would like to also thank Konstantine Arkoudas for his insight
on encoding structural induction over plain domains as a primitive
method in Athena to preserve the inductive reasoning capabilities.
Rocha acknowledges support from the SGR project PROMUEVA (BPIN
2021000100160) under the supervision of Minciencias (Ministerio de
Ciencia Tecnología e Innovación, Colombia).

\printbibliography

\end{document}